\title{Termination of LCTRSs\footnote{The research described in this
paper is supported by the Austrian Science Fund (FWF) international
project I963 and the Japan Society for the Promotion of Science.}}
\author[1]{Cynthia Kop}
\affil[1]{Department of Computer Science, University of Innsbruck\\
  Technikerstra\ss e 21a, 6020 Innsbruck, Austria \\
  \texttt{Cynthia.Kop@uibk.ac.at}}
\authorrunning{C. Kop}
\newcommand{\setvars}{\mathcal{V}}
\newcommand{\arrtype}{\Rightarrow}
\newcommand{\arrfunc}{\Longrightarrow}
\newcommand{\T}{\mathcal{I}}
\newcommand{\Terms}{\mathcal{T}\mathit{erms}}
\newcommand{\J}{\mathcal{J}}
\newcommand{\interpret}[1]{\llbracket #1 \rrbracket}
\newcommand{\Bool}{\mathbb{B}}
\newcommand{\arrz}{\rightarrow}
\newcommand{\arrzrule}{\arrz_{\mathtt{rule}}}
\newcommand{\arrzcalc}{\arrz_{\mathtt{calc}}}
\newcommand{\arrr}[1]{\arrz_{#1}^*}
\newcommand{\constraint}[1]{[#1]}
\newcommand{\FV}{\mathit{Var}}
\newcommand{\Rules}{\mathcal{R}}
\newcommand{\subst}[2]{#1#2}
\newcommand{\Sigmaterms}{\Sigma_{\mathit{terms}}}
\newcommand{\Sigmalogic}{\Sigma_{\mathit{theory}}}
\newcommand{\Values}{\mathcal{V}al}
\newcommand{\up}[1]{#1^\sharp}
\newcommand{\DP}{\mathsf{DP}}
\renewcommand{\P}{\mathcal{P}}
\newcommand{\project}{\overline{\nu}}
\newcommand{\LVar}{\mathit{LVar}}
\renewcommand{\equiv}{=}
\newcommand{\toint}{\mu}
\newcommand{\afun}{f}
\newcommand{\bfun}{g}
\newcommand{\avar}{x}
\newcommand{\bvar}{y}
\newcommand{\aterm}{s}
\newcommand{\bterm}{t}
\newcommand{\cterm}{q}
\newcommand{\asort}{\iota}
\newcommand{\bsort}{\kappa}
\newcommand{\symb}[1]{\mathsf{#1}}
\newcommand{\Z}{\mathbb{Z}}
\newcommand{\nul}{\symb{0}}
\newcommand{\one}{\symb{1}}
\newcommand{\summ}{\symb{sum}}
\newcommand{\Int}{\symb{int}}
\newcommand{\BOOL}{\symb{bool}}
\newcommand{\ack}{\symb{A}}
\newcommand{\standard}[1]{}
\newcommand{\nonstandard}[1]{#1}
\begin{document}

\maketitle

\begin{abstract}
\emph{Logically Constrained Term Rewriting Systems} (LCTRSs) provide a
general framework for term rewriting with constraints.  
We discuss a simple dependency pair approach to prove termination of
LCTRSs.  We see that existing techniques transfer to the constrained
setting in a natural way.
\end{abstract}

\section{Introduction}

In~\cite{kop:nis:13}, \emph{logically constrained term rewriting
systems} are introduced (building on~\cite{fur:nis:sak:kus:sak:08}
and~\cite{fal:kap:09}).  These \emph{LCTRSs} combine many-sorted
term rewriting with constraints in an arbitrary theory, and can be
used for analysing for instance imperative programs.

\emph{Termination} is an important part of such analysis, both for
its own sake (to guarantee finite program evaluation), and to create
an induction principle that can be used as part of other analyses
(for instance proofs of confluence~\cite{ter:03} or function
equality~\cite{fur:nis:sak:kus:sak:08}).

In unconstrained term rewriting,
many termination techniques exist, often centred around
\emph{dependency pairs}~\cite{art:gie:00}.
Some of these methods have
also been transposed to integer rewriting with
constraints~\cite{fal:kap:09}.  However, that setting is focused
purely on proving termination for its own sake, and thus poses very
strong restrictions on term and rule formation.

In this paper, we will see how a basic dependency pair approach can
be defined for LCTRSs, and extend several termination methods which
build around dependency pairs.

\section{Preliminaries (from~\cite{kop:nis:13})}

We assume standard notions of many-sorted term rewriting to be
well-uderstood.

Let $\setvars$ be an infinite set of sorted variables, $\Sigma =
\Sigmaterms \cup \Sigmalogic$ be a many-sorted signature, $\T$ a
mapping which assigns to each sort occurring in $\Sigmalogic$ a
set, and $\J$ a function which maps each $\afun : [\asort_1 \times
\ldots \times \asort_n] \arrtype \bsort \in \Sigmalogic$ to a
function $\J_\afun$ in $\T_{\asort_1} \arrfunc \ldots \arrfunc
\T_{\asort_n} \arrfunc \T_\bsort$.
For every sort $\asort$ occurring in $\Sigmalogic$ we also fix a set
$\Values_\asort \subseteq
\Sigmalogic$ of \emph{values}: function symbols $a : [] \arrtype
\asort$, where $\J$ gives a one-to-one mapping from $\Values_\asort$
to $\T_\asort$.
A value $c$ is identified with the term $c()$.
The elements of $\Sigmalogic$ and $\Sigmaterms$ overlap only on
values.

We call a term in $\Terms(\Sigmalogic,\setvars)$ a
\emph{logical term}.
For ground logical terms, we define
$
\interpret{\afun(\aterm_1,\ldots,\aterm_n)} :=
  \J_\afun(\interpret{\aterm_1},\ldots,\interpret{\aterm_n})
$.
A ground logical term $\aterm$ \emph{has value} $\bterm$ if $\bterm$
is a value such that $\interpret{\aterm} = \interpret{\bterm}$.
Every ground logical term has a unique value.
A \emph{constraint} is a logical term of some sort $\BOOL$
with $\T_\BOOL = \Bool$, the set of booleans.
A constraint $\aterm$ is \emph{valid} if
$\interpret{\subst{\aterm}{\gamma}}_\J = \top$
for all substitutions $\gamma$ which map the variables in
$\FV(\aterm)$ to a value.

A \emph{rule} is a triple $l \arrz r\ \constraint{\varphi}$ where $l$
and $r$ are terms with the same sort and $\varphi$ is a constraint;
$l$ is not a logical term (so also not a variable).
If $\varphi = \symb{true}$
with $\J(\symb{true}) = \top$, the rule is \linebreak just denoted $l
\arrz r$.
We define $\LVar(l \arrz r\ \constraint{\varphi}))$ as $\FV(\varphi)
\cup (\FV(r) \setminus \FV(l))$.
A substitution $\gamma$ \emph{respects} $l \arrz r\ 
\constraint{\varphi}$ if
$\gamma(\avar)$ is a value for all $\avar \in
\LVar(l \arrz r\ \constraint{\varphi})$ and $\subst{\varphi}{\gamma}$
is valid.

Given a set of rules $\Rules$, the \emph{rewrite relation}
$\arrz_{\Rules}$ is the union of $\arrzrule$ and $\arrzcalc$, where:
\begin{itemize}
\item $C[l\gamma] \arrzrule C[r\gamma]$ if $l \arrz r\ 
  \constraint{\varphi} \in \Rules$ and $\gamma$ respects $l \arrz r\ 
  \constraint{\varphi}$;
\item $C[\afun(\aterm_1,\ldots,\aterm_n)] \arrzcalc C[v]$ if $\afun
  \in \Sigmalogic\setminus\Sigmaterms$, all\ $\aterm_i$ values and
  $v$ is the value of $\afun(\vec{\aterm})$
\end{itemize}

A reduction step with $\arrzcalc$ is called a \emph{calculation}.
In an LCTRS with rules $\Rules$, the \emph{defined symbols} are all
symbols $\afun$ such that a rule $\afun(\vec{l}) \arrz r\ 
\constraint{\varphi}$ exists in $\Rules$.  Symbols $\afun \in
\Sigmalogic \setminus \Values$ are called \emph{calculation symbols}
and all other symbols are \emph{constructors}.
%

\begin{example}\label{ex:ackermann}
We consider an LCTRS with sorts $\Int$ and $\BOOL$, with $\T_\BOOL =
\Bool$ and $\Int$ mapped to the set of 16-bit signed
integers; addition is sensitive to overflow.
The rules are a naive implementation of the Ackermann function
(which will likely fall prey to overflows):
\vspace{-5pt}
\[
\begin{array}{rcllrcl}
\ack(m,n) & \arrz & \ack(m-\one,\ack(m,n-\one)) &
  \constraint{m \neq \nul \wedge n \neq \nul} &
\quad
\ack(\nul,n) & \arrz & n+\one \\
\ack(m,\nul) & \arrz & \ack(m-\one, \one) &
  \constraint{m \neq \nul} \\
\end{array}
\]
\vspace{-5pt}
$\ack$ is a defined symbols, $+,-,\neq,\wedge$ calculation symbols,
and all integers are constructors.
\end{example}


\section{Dependency Pairs}

As the basis for termination analysis, we will consider
\emph{dependency pairs}~\cite{art:gie:00}.
We first introduce a fresh sort $\symb{dpsort}$, and for all defined 
symbols $\afun : [\asort_1 \times \ldots \times \asort_n] \arrtype
\bsort$ also a new symbol $\up{\afun} : [\asort_1 \times \ldots
\times \asort_n] \arrtype \symb{dpsort}$.  If $\aterm =
\afun(\aterm_1,\ldots, \aterm_n)$ with $\afun$ defined,
then $\up{\aterm} := \up{\afun}(\aterm_1,\ldots,\aterm_n)$.

The dependency pairs of a given rule $l \arrz r\ \constraint{\varphi}
$ are all rules of the form $\up{l} \arrz \up{p}\ \constraint{\varphi
}$ where $p$ is a subterm of $r$ which is headed by a defined symbol.
The set of dependency pairs for a given set of rules $\Rules$,
notation $\DP(\Rules)$, consists of all dependency pairs of any rule
in $\Rules$.

\begin{example}\label{ex:ackdp}
Noting that for instance $\up{\ack}(m,\nul) \arrz m \up{-} \one$ is
\emph{not} a dependency pair, since $-$ is a calculation symbol
and not a defined symbol, Example~\ref{ex:ackermann} has three
dependency pairs:
\vspace{-5pt}
\[
\begin{array}{lrcll}
1. & \up{\ack}(m,\nul) & \arrz & \up{\ack}(m-\one,\one) &
  \constraint{m \neq \nul} \\
2. & \up{\ack}(m,n) & \arrz & \up{\ack}(m-\one,\ack(m,n-\one)) &
  \constraint{m \neq \nul \wedge n \neq \nul} \\
3. & \up{\ack}(m,n) & \arrz & \up{\ack}(m,n-\one) &
  \constraint{m \neq \nul \wedge n \neq \nul} \\
\end{array}
\]
\end{example}

Fixing a set $\Rules$ of rules, and given a set $\P$ of dependency
pairs, a \emph{$\P$-chain} is a sequence $\rho_1,\rho_2,\ldots$ of
dependency pairs such that 
  all $\rho_i$ are elements of $\P$, but with distinctly renamed
  variables
, and
there is 
some
$\gamma$ which respects all $\rho_i$,
  such that for all $i$: if $\rho_i = l_i \arrz p_i\ \constraint{
  \varphi_i}$ and $\rho_{i+1} = l_{i+1} \arrz p_{i+1}\ \constraint{
  \varphi_{i+1}}$, then $\subst{p_i}{\gamma} \arrr{\Rules} \subst{l_{
  i+1}}{\gamma}$.
  Also, the strict subterms of $l_i\gamma$ terminate.
%
We call $\P$ a \emph{DP problem} and say that $\P$
is \emph{chain-free} if there is no infinite 
$\P$-chain.\footnote{In the literature, we consider tuples of sets
and flags, which is necessary if we also want to consider non-minimal
chains, innermost termination or non-termination.  For simplicity
those are omitted here.}\footnote{In the literature, the word
\emph{finite} is used instead of \emph{chain-free}.  Since we have a
single set instead of a tuple, we used a different word to avoid
confusion (as ``finite'' might refer to the number of elements).}

\begin{theorem}\label{thm:dpchain}
An LCTRS $\Rules$ is terminating if and only if $\DP(\Rules)$ is
chain-free.
\end{theorem}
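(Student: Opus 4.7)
I prove both directions by contraposition, adapting the classical Arts--Giesl dependency pair argument to the constrained setting. The direction ``infinite chain $\Rightarrow$ non-terminating $\Rules$'' is a direct construction; the converse requires the usual minimality analysis, which is where the LCTRS apparatus needs attention.

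For the direct direction, assume an infinite chain $\rho_1,\rho_2,\ldots$ with witness $\gamma$, where $\rho_i = \up{l_i}\arrz\up{p_i}\ \constraint{\varphi_i}$ arises from a rule $l_i\arrz r_i\ \constraint{\varphi_i}\in\Rules$ and $p_i$ is a subterm of $r_i$. First I extend $\gamma$ on any variables in $\FV(r_i)\setminus(\FV(l_i)\cup\FV(p_i))$ by arbitrary values of the appropriate sort, so that $\gamma$ also respects each original rule; this is well-defined because the $\rho_i$ are variable-disjoint and each $\varphi_i\gamma$ is unchanged. Since sharp symbols never appear in $\Rules$, the chain condition $\up{p_i}\gamma\arrr{\Rules}\up{l_{i+1}}\gamma$ reduces strictly below the sharp head and corresponds to $p_i\gamma\arrr{\Rules}l_{i+1}\gamma$ between unsharped terms. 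Writing $r_i\gamma=C_i[p_i\gamma]$, the sequence
\[
l_1\gamma \arrz_\Rules C_1[p_1\gamma] \arrr{\Rules} C_1[l_2\gamma] \arrz_\Rules C_1[C_2[p_2\gamma]] \arrr{\Rules} \cdots
\]
is an infinite $\arrz_\Rules$-reduction, witnessing non-termination.

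For the harder direction, suppose $\Rules$ is non-terminating. By well-founded descent on term size I pick a \emph{minimal} non-terminating term $t_1$, i.e.\ one starting an infinite reduction while every strict subterm terminates. The head of $t_1$ must be a defined symbol: otherwise every reduction from $t_1$ either lies strictly inside the terminating arguments or consists of at most one $\arrzcalc$-step collapsing $t_1$ to a value, so $t_1$ would terminate. Since strict subterms of $t_1$ terminate, the infinite reduction performs only finitely many below-root steps before a root step, so after that prefix the term has the form $l_1\delta_1$ for some rule $l_1\arrz r_1\ \constraint{\varphi_1}$ with $\delta_1$ respecting it, and $r_1\delta_1$ is itself non-terminating.

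The crucial observation is that every $\delta_1(x)$ terminates: if $x\in\LVar(l_1\arrz r_1\ \constraint{\varphi_1})$ then $\delta_1(x)$ is a value, and otherwise $x\in\FV(l_1)$ makes $\delta_1(x)$ a strict subterm of $l_1\delta_1$ (a reduct of the minimal non-terminating $t_1$). Hence a minimal non-terminating subterm $t_2$ of $r_1\delta_1$ cannot lie strictly inside any $\delta_1(x)$, so $t_2 = p_1\delta_1$ for some subterm $p_1$ of $r_1$; moreover $p_1$ is not a variable (otherwise $t_2$ would terminate), so its head is defined and $\up{l_1}\arrz\up{p_1}\ \constraint{\varphi_1}\in\DP(\Rules)$. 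Iterating the construction with fresh variable renamings of each $\rho_i$ and taking $\gamma:=\bigcup_i\delta_i$ yields the infinite chain: the minimality clause on $l_i\gamma$ follows because $l_i\gamma=l_i\delta_i$ is a reduct of the minimal $t_i$, and the reduction $p_i\delta_i\arrr{\Rules}l_{i+1}\delta_{i+1}$ is supplied by the below-root prefix of iteration $i{+}1$. The main obstacle is precisely this case analysis ruling out that the next minimal non-terminating subterm lives strictly inside some $\delta_i(x)$; here the interplay between the $\LVar$-condition (values terminate trivially) and minimality (strict subterms terminate) is essential, and it is the one point where the standard proof needs genuine adaptation.
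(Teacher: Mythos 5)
The paper states Theorem~\ref{thm:dpchain} without proof, so there is nothing to compare against line by line; your argument is the standard Arts--Giesl soundness-and-completeness proof, correctly transposed to the constrained setting, and I see no gap. In particular you handle the two points where the LCTRS formalism genuinely matters: extending $\gamma$ by values on $\FV(r_i)\setminus(\FV(l_i)\cup\FV(p_i))$ so that the chain substitution also respects the originating rules, and the case analysis showing that a minimal non-terminating subterm of $r_1\delta_1$ cannot sit inside $\delta_1(\avar)$ because such terms are either values (hence normal forms) or terminating strict subterms of $l_1\delta_1$.
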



\section{The Dependency Graph}

To prove chain-freeness of a DP problem, we might for instance use the
dependency graph:

\begin{definition}
A \emph{dependency graph approximation} of a DP problem $\P$ is a
graph $G$ whose nodes are the elements of $\P$ and which has an edge
between $\rho_1$ and $\rho_2$ if $(\rho_1,\rho_2')$ is a
$\P$-chain, where $\rho_2'$ is a copy of $\rho_2$ with fresh
variables.  $G$ may have additional edges.
\end{definition}

\begin{theorem}\label{thm:graph}
A DP problem $\P$ with graph approximation $G$ is chain-free if and
only if $\P'$ is chain-free for every strongly connected component
(SCC) $\P'$\! of $G$.
\pagebreak
\end{theorem}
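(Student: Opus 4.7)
The plan is to handle the two directions separately: the ``only if'' direction is immediate, and the real work lies in the ``if'' direction, where a pigeonhole argument on SCCs is combined with a localisation of the chain witness.

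For the easy direction I will note that every SCC $\P'$ of $G$ is a subset of $\P$, so any infinite $\P'$-chain is automatically an infinite $\P$-chain using the same witness substitution $\gamma$. Hence chain-freeness of $\P$ transfers to each $\P'$ with no further work.

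For the converse, assume every SCC of $G$ is chain-free and, for contradiction, fix an infinite $\P$-chain $\rho_1, \rho_2, \ldots$ witnessed by a substitution $\gamma$. The first observation is that, after renaming $\rho_{i+1}$ to have fresh variables, the two-element sequence $(\rho_i, \rho_{i+1}')$ is itself a $\P$-chain (one just restricts and relabels $\gamma$), so by the definition of the graph approximation $G$ contains an edge from $\rho_i$ to $\rho_{i+1}$ for every $i$. This yields an infinite walk in $G$. Assuming $\P$ is finite (the standard setting for DP problems), $G$ has only finitely many SCCs, and edges between distinct SCCs are unidirectional: once the walk leaves an SCC it cannot return. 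By pigeonhole, from some index $N$ onward all $\rho_i$ lie in a single SCC $\P'$. The tail $\rho_N, \rho_{N+1}, \ldots$ is then an infinite $\P'$-chain, inheriting $\gamma$, the same intermediate $\arrr{\Rules}$ reductions, and the subterm-termination conditions, which contradicts chain-freeness of $\P'$.

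The main obstacle I anticipate is the bookkeeping around the witness substitution: one needs to verify both that a suffix of a chain remains a chain with the same $\gamma$ and that consecutive pairs give rise to the two-step chains required by the edge definition, modulo the fresh-variable convention. This is routine but somewhat finicky; once it is in place, the topological SCC argument goes through as outlined. A secondary point worth flagging is the tacit finiteness of $\P$: with infinite $\P$ the ``eventually stays in one SCC'' step would require an additional K\"onig-style argument, but finite $\P$ is the standard assumption for DP problems and is all that is needed here.
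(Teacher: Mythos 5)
Your proof is correct and follows the standard dependency-graph argument: the ``only if'' direction by restriction of the witness, and the ``if'' direction by observing that consecutive elements of an infinite chain yield edges of $G$, so the induced infinite walk must eventually remain inside a single SCC (the condensation of $G$ being a finite DAG), whose tail is then an infinite chain for that SCC. The paper states Theorem~\ref{thm:graph} without proof, implicitly importing exactly this reasoning from the unconstrained setting, so your argument matches the intended one; your closing remarks on suffix-stability of the witness $\gamma$ and on the finiteness of $\P$ correctly identify the only bookkeeping points.
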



\begin{example}\label{ex:graph}
Consider an LCTRS with rules $\Rules = \{ \afun(\avar) \arrz
\afun(\nul-\avar)\ \constraint{\avar > \nul} \}$.  Then
$\DP(\Rules) = \{ \up{\afun}(\avar) \arrz \up{\afun}(-\avar)\ 
\constraint{\avar > \nul} \}$.  The dependency graph of $\DP(
\Rules)$ has
one node, and no edges, since there is no substitution $\gamma$
which satisfies both $\gamma(\avar) > \nul$ and $\gamma(\bvar) >
\nul$ and yet has $(-\avar)\gamma \arrr{\Rules} \bvar\gamma$ (as
logical terms reduce only with $\arrzcalc$).
Thus, clearly every SCC of this graph is terminating, so
$\DP(\Rules)$ is chain-free, so $\Rules$ is terminating!
\end{example}

Of course, manually choosing a graph approximation is one thing,
but finding a good one \emph{automatically} is more difficult.  We
consider one way to choose such an approximation:

Given a DP problem $\P$, let $G_\P$ be the graph with the elements of
$\P$ as nodes, and with an edge from $l_1 \arrz r_1\ \constraint{
\varphi_1}$ to $l_2 \arrz r_2\ \constraint{\varphi_2}$ if the formula
$\varphi_1 \wedge \varphi_2' \wedge \psi(r_1,l_2',\LVar(l_1 \arrz r_1\ 
\constraint{\varphi_1})\linebreak \cup \LVar(l_2' \arrz r_2'\ 
\constraint{\varphi_2'}))$ is satisfiable (or its satisfiability
cannot be determined).  Here, $l_2' \arrz r_2'\ 
\constraint{\varphi_2'}$ is a copy of $l_2 \arrz r_2\ \constraint{
\varphi_2}$ with fresh variables,
and $\psi(\aterm,\bterm,L)$ is given by the 
clauses:

\begin{itemize}
\item $\psi(\aterm,\bterm,L) = \top$ if either $\aterm$ is a variable
  not in $L$, or $\aterm = \afun(\aterm_1,\ldots,\aterm_n)$ and one
  of: 
  \begin{itemize}
  \item $\afun$ is a defined symbol, and $\aterm \notin
    \Terms(\Sigmalogic,L)$,
  \item $\afun$ is a calculation symbol, $\bterm$ a value or
    variable, and $\aterm \notin \Terms(\Sigmalogic,L)$,
  \item $\afun$ is a constructor and $\bterm$ a variable not in $L$;
  \end{itemize}
\item $\psi(\aterm,\bterm,L) = \bigwedge_{i = 1}^n \psi(
  \aterm_i,\bterm_i,L)$ if $\aterm = \afun(\aterm_1,\ldots,
  \aterm_n)$ and $\bterm = \afun(\bterm_1,\ldots,\bterm_n)$ and
  $\afun$ not defined;
\item $\psi(\aterm,\bterm,L)$ is the formula $\aterm =
  \bterm$ if
  $\aterm \in \Terms(\Sigmalogic,L)$,\ 
  $\bterm \in \Terms(\Sigmalogic,\setvars)$ and
  $\aterm$ and $\bterm$ are not headed by the same theory symbol
  (we already covered that case);
\item $\psi(\aterm,\bterm,L) = \bot$ in all other cases.
\end{itemize}

\begin{theorem}\label{thm:graphapprox}
$G_\P$ is a graph approximation for $\P$.
\end{theorem}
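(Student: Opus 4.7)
The plan is to establish the contrapositive: whenever $(\rho_1,\rho_2')$ is a $\P$-chain (with $\rho_2'$ a fresh-variable copy of $\rho_2$), the defining formula of the candidate edge is satisfiable, so the edge is present in $G_\P$. Given such a chain, I would fix a witnessing substitution $\gamma$: it respects both rules and satisfies $\subst{r_1}{\gamma} \arrr{\Rules} \subst{l_2'}{\gamma}$. Respect of the two rules immediately yields validity of $\subst{\varphi_1}{\gamma}$ and $\subst{\varphi_2'}{\gamma}$, so the real content is to show that $\gamma$, suitably extended to any variables outside $L = \LVar(\rho_1) \cup \LVar(\rho_2')$, satisfies $\psi(r_1,l_2',L)$.

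This I would factor out as a stand-alone lemma, proved by induction on the size of $s$: if $\sigma$ maps every $x \in L$ to a value and $\subst{s}{\sigma} \arrr{\Rules} \subst{t}{\sigma}$, then $\psi(s,t,L)$ has a satisfying assignment extending $\sigma|_L$. The clauses giving $\top$ are immediate. For the recursive clause (same non-defined head $\afun$ on both sides), I would argue that no step of the reduction can contract at the root: a constructor has no rule at all, while a calculation symbol $\afun$ at the root fires via $\arrzcalc$ only when all arguments are values, and the produced value is not $\afun$-headed and admits no further reduction; hence every step occurs within some proper argument, giving $\subst{s_i}{\sigma} \arrr{\Rules} \subst{t_i}{\sigma}$ and letting the induction hypothesis close the conjunction. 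For the equality clause, $\subst{s}{\sigma}$ is ground logical (since $s \in \Terms(\Sigmalogic,L)$ and $\sigma$ maps $L$ to values), so it has a unique value $v$ and can only reduce via $\arrzcalc$; therefore $\subst{t}{\sigma}$ is ground logical with value $v$ as well, and extending $\sigma$ on the non-$L$ variables of $t$ by the values of the corresponding $\sigma(y)$ makes $s = t$ true in the theory.

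The main obstacle is the $\psi = \bot$ clause: one must verify that every configuration not absorbed by clauses 1--3 genuinely excludes the reduction $\subst{s}{\sigma} \arrr{\Rules} \subst{t}{\sigma}$. This is a careful enumeration over the kind of the head symbol of $s$ (defined, calculation, constructor, variable in or out of $L$), the shape of $t$, and the logical-term membership of both sides. Each residual sub-case is eliminated by the same two building blocks used throughout: rule left-hand sides are by assumption non-logical, so ground logical terms reduce only by $\arrzcalc$; and any reduction in a proper subterm preserves the head symbol, so mismatched top symbols survive through the whole sequence. Collecting these exclusions is routine but the most error-prone part of the argument, and careful bookkeeping of which sub-conditions of clause 1 consume which configurations is essential to ensuring nothing is missed.
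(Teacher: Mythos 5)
The paper gives no proof of this theorem at all --- it is stated bare, with only the remark that the result ``corresponds largely'' with~\cite{sak:nis:sak:11} --- so there is nothing to compare your argument against line by line; it has to be judged on its own. Your overall strategy is the right one, and essentially the only sensible one: unfold the definition of graph approximation to see that what must be shown is ``$(\rho_1,\rho_2')$ a $\P$-chain implies $\varphi_1 \wedge \varphi_2' \wedge \psi(r_1,l_2',L)$ satisfiable'', extract the witnessing $\gamma$ (which maps all of $L$ to values because it respects both rules, and makes $\varphi_1\gamma$ and $\varphi_2'\gamma$ valid), and reduce everything to a soundness lemma for $\psi$ proved by induction on $\aterm$. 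Your treatment of the recursive clause (no root step under a constructor; a root calculation step would leave a value, which is irreducible and cannot match the head of $\bterm\gamma$) and of the equality clause (a ground logical term reduces only by $\arrzcalc$, hence to terms with the same value) is correct, including the observation that reducts of ground terms stay ground so that $\bterm\gamma$ really has a value.

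Two points deserve more care than your sketch gives them. First, the satisfying assignment must work \emph{simultaneously} for $\varphi_1$, $\varphi_2'$ and every equality atom that $\psi$ generates; since $\FV(\varphi_1)\cup\FV(\varphi_2')\subseteq L$ and your extension assigns each non-$L$ variable $\bvar$ of $l_2'$ the value of $\gamma(\bvar)$ uniformly (not per-atom), this is consistent, but you should say so explicitly --- the same $\bvar$ can occur in several equality atoms. Second, the $\bot$ clause really is where the theorem lives, and you have only promised the case analysis rather than performed it; the two ``building blocks'' you name (non-logical left-hand sides, head preservation under non-root steps) do suffice for the standard cases (e.g.\ $\aterm$ headed by a constructor and $\bterm$ headed by a different symbol, or $\aterm$ a variable in $L$ with $\bterm\notin\Terms(\Sigmalogic,\setvars)$), but until each residual case is written out the proof is a plan, not a proof. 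I would accept the outline as correct and complete it by listing the finitely many combinations of (kind of head of $\aterm$, membership of $\aterm$ in $\Terms(\Sigmalogic,L)$, shape of $\bterm$) and checking that every one falling through to $\bot$ forces $\aterm\gamma \not\arrr{\Rules} \bterm\gamma$.
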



This graph result and the given approximation
correspond largely with the result of~\cite{sak:nis:sak:11}.

\begin{example}
The graph in Example~\ref{ex:graph} is calculated with this method:
$\psi(\up{\afun}(-\avar),\up{\afun}(\bvar),\{\avar,\linebreak
\bvar\})
\wedge \avar > \nul \wedge \bvar > \nul$ evaluates to
$-\avar = \bvar \wedge \avar > \nul \wedge \bvar > \nul$ (as
$\up{\afun}$ is a constructor with respect to $\Rules$), which is
not satisfiable (as any decent SMT-solver over the integers can
tell us).
\end{example}

%

\section{The Value Criterion}

To quickly handle DP problems, we consider a technique similar to the
subterm criterion in the unconstrained case.  This \emph{value
criterion} can also be seen as a simpler version of polynomial
interpretations, which does not require ordering rules (see
Section~\ref{sec:redpair}).

\begin{definition}
Fixing a set $\P$ of dependency pairs, a \emph{projection function}
for $\P$ is a function $\nu$ which assigns to each symbol $\up{\afun}
: [\asort_1 \times \ldots \times \asort_n] \arrtype \symb{dpsort}$
a number $\nu(\up{\afun}) \in \{1,\ldots,n\}$.  A projection function
is extended to a function on terms as follows:
$\overline{\nu}(\up{\afun}(\aterm_1,\ldots,\aterm_n)) =
\aterm_{\nu(\up{\afun})}$.
\end{definition}

\begin{theorem}\label{thm:valuecrit}
Let $\P$ be a set of dependency pairs, $\asort$ a sort and $\nu$ a
projection function for $\P$, with the following property: for any
dependency pair $l \arrz r\ \constraint{\varphi} \in \P$, if
$\overline{\nu}(l)$ has sort $\asort$ and is a \standard{value or
variable}\nonstandard{logical term (this includes variables)}, then
the same holds for $\overline{\nu}(r)$.
Let moreover $\succ$ be a well-founded ordering relation on
$\T_\asort$ and $\succeq$ a quasi-ordering such that $\succ \cdot
\succeq\ \subseteq\ \succ$.
Suppose additionally that we can write $\P = \P_1 \cup \P_2$, such
that for all $\rho \equiv l \arrz r\ \constraint{\varphi} \in \P$:
\begin{itemize}
\item if $\project(l)$ is a logical term of sort $\asort$, then so is
  $\project(r)$, and $\FV(\project(r)) \subseteq \FV(\project(l))$;
\item if $\rho \in \P_1$, then $\project(l)$ has sort $\asort$
  and $\project(l) \in \Terms(\Sigmalogic,\LVar(\rho))$;
\item if $\overline{\nu}(l)$ has sort $\asort$ and $\project(l) \in
  \Terms(\Sigmalogic,\setvars)$, then $\varphi \Rightarrow
  \project(l) \succ \project(r)$ is valid if $\rho \in \P_1$, and
  $\varphi \Rightarrow \project(l) \succeq \project(r)$ is valid if
  $\rho \in \P_2$.
\end{itemize}
Then $\P$ is chain-free if and only if $\P_2$ is chain-free.
\end{theorem}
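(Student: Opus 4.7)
The $(\Leftarrow)$ direction is immediate: since $\P_2 \subseteq \P$, any infinite $\P_2$-chain is also an infinite $\P$-chain. For $(\Rightarrow)$, my plan is to argue by contraposition. Assume an infinite $\P$-chain $\rho_1,\rho_2,\ldots$ with witness substitution $\gamma$; I will show that only finitely many $\rho_i$ can lie in $\P_1$, so that the tail of the chain from that point on is an infinite $\P_2$-chain (the chain properties are inherited by tails).

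The core of the argument is an induction establishing the following invariant: once $\project(l_{i_0})\gamma$ is a ground logical term of sort $\asort$, then so is $\project(l_i)\gamma$ for every $i \geq i_0$, and moreover $\interpret{\project(l_i)\gamma} \succ \interpret{\project(l_{i+1})\gamma}$ when $\rho_i \in \P_1$ and $\interpret{\project(l_i)\gamma} \succeq \interpret{\project(l_{i+1})\gamma}$ when $\rho_i \in \P_2$. The inductive step strings together four observations. (i) Because $\project(l_i)\gamma$ is ground logical, every function symbol appearing in $\project(l_i)$ itself must already lie in $\Sigmalogic$, so $\project(l_i)$ is a logical term of sort $\asort$. (ii) The first bullet of the hypothesis then gives that $\project(r_i)$ is also a logical term of sort $\asort$ with $\FV(\project(r_i)) \subseteq \FV(\project(l_i))$, so $\project(r_i)\gamma$ is ground logical too. (iii) The third bullet, together with the fact that $\gamma$ respects $\rho_i$ (so $\varphi_i\gamma$ is valid), yields the corresponding $\succ$ or $\succeq$ between $\interpret{\project(l_i)\gamma}$ and $\interpret{\project(r_i)\gamma}$. (iv) Since the marker symbols $\up{\afun}$ never appear at the head of any rule LHS, the chain-step $r_i\gamma \arrr{\Rules} l_{i+1}\gamma$ acts argument-wise and yields $\project(r_i)\gamma \arrr{\Rules} \project(l_{i+1})\gamma$; because $\project(r_i)\gamma$ is ground logical, this reduction uses only $\arrzcalc$, so $\project(l_{i+1})\gamma$ remains ground logical of sort $\asort$ with $\interpret{\project(l_{i+1})\gamma} = \interpret{\project(r_i)\gamma}$, closing the induction.

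Any $\rho_{i_0} \in \P_1$ supplies a base case via the second bullet: $\project(l_{i_0}) \in \Terms(\Sigmalogic,\LVar(\rho_{i_0}))$ is of sort $\asort$, and since $\gamma$ sends $\LVar$-variables to values, $\project(l_{i_0})\gamma$ is ground logical of sort $\asort$. If infinitely many $\rho_i$ lay in $\P_1$, the assumption $\succ \cdot \succeq\ \subseteq\ \succ$ would collapse the intervening weak steps into an infinite $\succ$-descending sequence in $\T_\asort$, contradicting well-foundedness of $\succ$. Hence only finitely many $\rho_i$ are in $\P_1$, yielding the desired $\P_2$-chain. The subtlest step to pin down is observation (iii): validity in the third bullet is quantified over substitutions sending variables to values, but $\gamma$ is only guaranteed to be value-valued on $\LVar(\rho_i)$, while $\project(l_i)$ and $\project(r_i)$ may well contain free variables outside $\LVar$. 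I would handle this by observing that ground-logicality of $\project(l_i)\gamma$ already forces each such $\gamma(x)$ to be a ground logical term, hence to have a unique value $v_x$; replacing $\gamma$ by the value-substitution $x \mapsto v_x$ preserves the interpretations of $\varphi_i$, $\project(l_i)$ and $\project(r_i)$, so validity applied to this value-substitution transfers back to the original $\gamma$.
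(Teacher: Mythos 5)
Your argument is correct and is exactly the paper's (one-sentence) proof spelled out in full: an infinite chain with infinitely many $\P_1$-pairs projects, via $\project$, to an infinite $\succeq^*\cdot\succ$ descent in $\T_\asort$, contradicting well-foundedness, and you correctly fill in the details the paper omits (ground-logicality propagating along the chain, argument-wise reduction below the marked root, calculation steps preserving interpretations, and the value-substitution trick for variables outside $\LVar$). The only blemish is cosmetic: your "$(\Leftarrow)$" and "$(\Rightarrow)$" labels are swapped relative to the direction of the stated equivalence, but both implications are in fact proved.
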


\begin{proof}
A chain with infinitely many elements of $\P_1$ gives an
infinite $\succeq^* \cdot \succ$ reduction.
\end{proof}

\begin{example}
Using the value criterion, we can complete termination analysis of
the Ackermann example.  Choosing for $\succ$ the \emph{unsigned}
comparison on bitvectors (so $n \succ m$ if either $n$ is negative
and $m$ is not, or $\mathit{sign}(n) = \mathit{sign}(m)$ and $n >
m$), and $\nu(\ack) = 1$, we have:
\begin{itemize}
\item $\up{\ack}(m,\nul) \arrz \up{\ack}(m-\one,\one)\ 
  \constraint{m \neq \nul}$:
  $
  (m \neq \nul) \Rightarrow m \succ m-1$
\item $\up{\ack}(m,n) \arrz \up{\ack}(m-\one,\ack(m,n-\one))\ 
  \constraint{m \neq \nul \wedge n \neq \nul}$:
  $
  (m \neq \nul \wedge n \neq \nul) \Rightarrow m \succ m-1$
\item $\up{\ack}(m,n) \arrz \up{\ack}(m,n-\one)\ 
  \constraint{m \neq \nul \wedge n \neq \nul}$
  $(m \neq \nul \wedge n \neq \nul) \Rightarrow m \succeq m$
\end{itemize}
All three are valid, so $\P$ is chain-free if $\P_2 =
\{ \up{\ack}(m,n) \arrz \up{\ack}(m,n-\one)\ \constraint{m \neq \nul
\wedge \nul \wedge n \neq \nul} \}$ is.
This we prove with another application of the value criterion, now
taking $\nu(\up{\ack}) = 2$.
\end{example}

Note that the difficulty to apply the value criterion is in finding a
suitable value ordering.  There are various systematic techniques for
doing this (depending on the underlying theory), but their specifics
are beyond the scope of this paper.

\section{Reduction Pairs}\label{sec:redpair}

Finally, the most common method to prove chain-freeness is the use of
a \emph{reduction pair}.

A reduction pair $(\succsim,\succ)$ is a
pair of a \emph{monotonic quasi-ordering} and a \emph{well-founded
partial ordering} on terms such that $\aterm \succ \bterm \succsim
\cterm$ implies $\aterm \succ \cterm$.  Note that it is not required
that $\succ$ is included in $\succsim$; $\succsim$ might also for
instance be an equivalence relation.
A rule $l \arrz r\ \constraint{\varphi}$ is \emph{compatible} with $R
\in \{ \succsim, \succ \}$ if for all substitutions $\gamma$ which
respect the rule we have: $l\gamma\ R\ r\gamma$.

\begin{theorem}
A set of dependency pairs $\P$ is chain-free if and only if there is
a reduction pair $(\succsim,\succ)$ and we can write $\P = \P_1 \cup
\P_2$ such that $\P_2$ is chain-free, and:
\begin{itemize}
\item all $\rho \in \P_1$ are compatible with $\succ$ 
  and
      all $\rho \in \P_2$ are compatible with $\succsim$;
\item either all $\rho \in \Rules$ are compatible with $\succsim$, \\
  or
  all $\rho \in \P$ have the form $l \arrz \afun(\aterm_1,\ldots,
  \aterm_i)\ \constraint{\varphi}$ with all $\aterm_i \in
  \Terms(\Sigmalogic,\LVar(\rho))$;
\item 
  $\afun(\vec{v}) \succsim w$ if $\afun$ is a calculation symbol,
  $v_1,\ldots,v_n$ are values and $w$ is the value of
  $\afun(\vec{v})$.
\end{itemize}
\end{theorem}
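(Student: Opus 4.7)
The forward direction is immediate: take $\P_1 = \emptyset$, $\P_2 = \P$, $\succ = \emptyset$, and $\succsim$ any monotonic preorder containing all $\arrz_\Rules$-steps, calculation steps, and DP steps. All stated conditions then hold vacuously or by construction. The substance lies in the backward direction, and my plan is to argue by contraposition: assume an infinite $\P$-chain $\rho_1, \rho_2, \ldots$ witnessed by a substitution $\gamma$, and extract either an infinite $\succ$-descent (impossible by well-foundedness) or an infinite $\P_2$-chain (impossible by hypothesis).

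The first building block is that, since $\gamma$ respects each $\rho_i = l_i \arrz p_i\ \constraint{\varphi_i}$, compatibility gives $l_i\gamma \succ p_i\gamma$ when $\rho_i \in \P_1$ and $l_i\gamma \succsim p_i\gamma$ when $\rho_i \in \P_2$. The second and key building block is $p_i\gamma \succsim l_{i+1}\gamma$; since the chain supplies $p_i\gamma \arrr{\Rules} l_{i+1}\gamma$, it suffices to show every individual $\arrz_\Rules$-step along this reduction preserves $\succsim$, after which transitivity closes the gap.

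Here the disjunction in the hypothesis is essential. If every rule of $\Rules$ is compatible with $\succsim$, a step $C[l'\delta] \arrzrule C[r'\delta]$ gives $l'\delta \succsim r'\delta$ by compatibility (as $\delta$ respects the rule), and monotonicity lifts this under the context $C$; calculation steps $C[\afun(\vec{v})] \arrzcalc C[w]$ are handled by $\afun(\vec{v}) \succsim w$ plus monotonicity. If instead every DP has the shape $l \arrz \afun(\aterm_1,\ldots,\aterm_n)\ \constraint{\varphi}$ with each $\aterm_j \in \Terms(\Sigmalogic, \LVar(\rho))$, then $p_i\gamma$ has a marked-symbol head (so no root-step of $\Rules$ applies, as $\Rules$ contains no rules for marked symbols) and each argument $\aterm_j\gamma$ is a ground logical term (since $\gamma$ sends $\LVar$-variables to values), hence admits only $\arrzcalc$-steps; again the calculation condition and monotonicity suffice.

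Combining the two building blocks, we obtain an infinite sequence $l_1\gamma\ R_1\ p_1\gamma \succsim l_2\gamma\ R_2\ p_2\gamma \succsim \cdots$ with each $R_i \in \{\succ, \succsim\}$. If $\rho_i \in \P_1$ infinitely often, repeatedly applying $\succ \cdot \succsim\ \subseteq\ \succ$ (together with transitivity of $\succsim$) extracts an infinitely descending $\succ$-chain, contradicting well-foundedness. Hence $\rho_i \in \P_2$ for all but finitely many $i$, and the corresponding tail is an infinite $\P_2$-chain, contradicting chain-freeness of $\P_2$. I expect the main obstacle to be the second alternative of the preceding paragraph: one must carefully verify that no $\arrzrule$-step can occur inside $p_i\gamma \arrr{\Rules} l_{i+1}\gamma$, leveraging both that marked heads never occur in $\Rules$ and that the arguments are ground logical under $\gamma$, so that the otherwise-required $\succsim$-compatibility of $\Rules$ can be dispensed with.
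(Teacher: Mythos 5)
The paper states this theorem without proof, so there is nothing to compare against directly; your argument is the standard reduction-pair argument one would expect here, and it is correct. In particular you correctly identify and discharge the one delicate point, namely that under the second alternative of the disjunction the reduction $p_i\gamma \arrr{\Rules} l_{i+1}\gamma$ takes place below a marked root symbol and inside ground logical arguments, so it consists only of $\arrzcalc$-steps and the $\succsim$-compatibility of $\Rules$ is not needed.
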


\noindent
Note that all rules must be compatible with $\succsim$, unless the
subterms of the right-hand sides in $\P$ 
can only be instantiated to ground logical terms;
in this
(reasonably common!) case, we can ignore the rules in the termination
argument.  This is a weak step in the direction of \emph{usable
rules}, a full treatment of which is beyond the scope of this short
paper.

For the reduction pair, we might for instance use the recursive path
ordering described in~\cite{kop:nis:13}.  Alternatively, we could
consider \emph{polynomial interpretations}:

\begin{theorem}
Given a mapping $\toint$ which assigns to each function symbol $\afun
: [\asort_1 \times \ldots \times \asort_n] \arrtype \bsort \in
\Sigmaterms \cup \Sigmalogic$ an $n$-ary polynomial over $\Z$, and
a valuation $\alpha$ which maps each variable\linebreak to an integer, every
term $\aterm$ corresponds to an integer
$\overline{\toint}_\alpha(\aterm)$.
Let $\aterm \succ \bterm$ if for all $\alpha$:
$\overline{\toint}_\alpha(\aterm) > \max(0,
\overline{\toint}_\alpha(\bterm))$,
and $\aterm \succsim \bterm$ if for all $\alpha$:
$\overline{\toint}_\alpha(\aterm) = \overline{\toint}_\alpha(\bterm)$.
Then $(\succsim,\succ)$ is a reduction pair.
\end{theorem}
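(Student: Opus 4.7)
The plan is to unfold the paper's definition of reduction pair and verify each component directly from the polynomial semantics. I first extend the evaluation $\overline{\toint}_\alpha$ to all terms by structural recursion: $\overline{\toint}_\alpha(\avar) = \alpha(\avar)$ for $\avar \in \setvars$, and $\overline{\toint}_\alpha(\afun(\aterm_1,\ldots,\aterm_n)) = \toint(\afun)(\overline{\toint}_\alpha(\aterm_1),\ldots,\overline{\toint}_\alpha(\aterm_n))$. Three things then need checking: (i) $\succsim$ is a monotonic quasi-ordering; (ii) $\succ$ is a well-founded partial ordering; (iii) $\succ \cdot \succsim\ \subseteq\ \succ$.

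For (i), reflexivity and transitivity of $\succsim$ are immediate from equality of integers. Monotonicity under contexts, and stability under substitutions, both follow by a routine induction on term structure: each polynomial $\toint(\afun)$ is a function of the integers assigned to its arguments, so replacing an argument by a $\succsim$-equivalent subterm cannot change the outcome. The substitution case uses $\overline{\toint}_\alpha(\aterm\gamma) = \overline{\toint}_\beta(\aterm)$ with $\beta(\avar) := \overline{\toint}_\alpha(\gamma(\avar))$.

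For (ii), irreflexivity is forced because $\overline{\toint}_\alpha(\aterm) > \max(0,\overline{\toint}_\alpha(\aterm))$ is impossible. The key observation — used for transitivity, well-foundedness, and clause (iii) alike — is that whenever $\bterm \succ \cterm$ we must have $\overline{\toint}_\alpha(\bterm) \geq 1$ for every $\alpha$, so the $\max(0,\cdot)$ on the right of $\aterm \succ \bterm$ collapses to $\overline{\toint}_\alpha(\bterm)$ itself. Transitivity then follows in one line, and well-foundedness follows because an infinite chain $\aterm_1 \succ \aterm_2 \succ \cdots$ yields, for any fixed $\alpha$, a strictly descending sequence of positive integers $\overline{\toint}_\alpha(\aterm_i)$. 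Clause (iii) is a two-line calculation using that $\succsim$ preserves $\overline{\toint}_\alpha$, and hence preserves $\max(0,\overline{\toint}_\alpha(\cdot))$.

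The main obstacle is conceptual rather than technical: one must notice that the paper's definition of reduction pair demands monotonicity only of $\succsim$, not of $\succ$. This is precisely what allows $\toint$ to assign \emph{arbitrary} integer polynomials, in contrast to the classical $\mathbb{N}$-valued setting where strict monotonicity of $\succ$ forces each $\toint(\afun)$ to be weakly monotone in every argument. The $\max(0,\cdot)$ in the definition of $\succ$ is exactly the device that recovers well-foundedness despite the interpretation's range being all of $\Z$.
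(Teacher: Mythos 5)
Your proof is correct. The paper states this theorem without giving any proof, so there is nothing to compare against; your verification is the natural one, and in particular you correctly identify the two points that make the statement work as stated --- that monotonicity is only demanded of $\succsim$ (which, being defined by equality of interpretations, is monotonic for arbitrary, not necessarily monotone, polynomials over $\Z$), and that the $\max(0,\cdot)$ forces $\overline{\toint}_\alpha(\bterm) \geq 1$ whenever $\bterm$ is $\succ$-above something, which yields transitivity, well-foundedness and the compatibility clause $\succ \cdot \succsim\ \subseteq\ \succ$ all at once.
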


Here, $\succsim$ is an equivalence relation.  Alternatively we might
base $\succsim$ on the $\geq$ relation in $\Z$, but then we must pose
an additional weak monotonicity requirement on $\toint$.

\begin{example}\label{ex:sum}
We consider an LCTRS over the integers, without overflow.  This
example uses bounded iteration, which is common in systems derived
from imperative programs:
\[
\quad\quad
\summ(\avar,\bvar) \arrz \nul\ \constraint{\avar > \bvar}
\quad\quad\quad
\summ(\avar,\bvar) \arrz \avar + \summ(\avar+\one,\bvar)\ 
  \constraint{\avar \leq \bvar}
\]
This system admits one dependency pair:
$\up{\summ}(\avar,\bvar) \arrz \up{\summ}(\avar+\one,\bvar)\ 
\constraint{\avar \leq \bvar}$.
Neither the dependency graph nor the value criterion can handle this
pair.  We \emph{can} orient it using polynomial interpretations, with
$\toint(\summ) = \lambda n m.m-n+1$; integer functions and integers
are interpreted as themselves.  Then $x \leq y \Rightarrow y - x + 1
> \max(0, y - (x+1) + 1)$ is valid, so the pair is compatible with
$\succ$ as required.

Thus, $\DP(\Rules)$ is chain-free if and only if $\emptyset$ is
chain-free, which is obviously the case!
\end{example}

\section{Related Work}

The most important related work is~\cite{fal:kap:09}, where a
constrained term rewriting formalism over the integers is introduced,
and methods are developed to prove termination similar to the ones
discussed here.  The major difference with the current work is that
the authors of~\cite{fal:kap:09} impose very strong type restrictions:
they consider only theory symbols (of sort $\mathtt{int}$) and defined
symbols (of sort $\mathtt{unit}$).  Rules have the form
$\afun(\avar_1,\ldots,\avar_n) \arrz \bfun(\aterm_1,\ldots,\aterm_n)$,
where the $\avar_i$ are variables and all $\aterm_i$ are logical
terms.  This significantly simplifies the analysis (for example, the
dependency pairs are exactly the rules), but has more limited
applications; it suffices for proving termination of simple
(imperative) integer programs, but does not help directly for
analysing confluence or function equivalence.

\section{Conclusion}

In this paper, we have seen how termination methods for normal TRSs,
and in particular the dependency pair approach, extend naturally to
the setting of LCTRSs.  Decision procedures are handled by solving
validity of logical formulas.  While this is undecidable in general,
many practical cases can be handled using today's powerful
SMT-solvers.

Considering termination results, we have only seen the tip of the
iceberg.  In the future, we hope to extend the constrained dependency
pair framework to handle also innermost termination and
non-termination.  Moreover, the dependency pair approach can be
strengthened with various techniques for simplifying dependency pair
processors, both adaptations of existing techniques for unconstrained
term rewriting (such as usable rules) and specific methods for
constrained term rewriting (such as the \emph{chaining} method used
in~\cite{fal:kap:09} or methods to add constraints in some cases).

In addition, we hope to provide an automated termination tool for
LCTRSs in the near future.  Such a tool could for instance be coupled
with a transformation tool from e.g.\ C or Java to be
immediately applicable for proving termination of imperative
programs, or can be used as a back-end for analysis tools of confluence
or function equivalence.

\bibliography{references}

\end{document}